\newtheorem{theorem}{Theorem}
\begin{document}
	\title{
    %Non-Markovianity, P-divisibility and Information Back-flow
    On the distinction between distinguishability of states and witness of non-Markovianity of dynamical maps}
	\author{Vijay Pathak}
	\email{vijay@ppisr.res.in}
	\author{R.Srikanth}
	\email{srik@ppisr.res.in}
	\affiliation{%
		Theoretical Physics Department, Poornaprajna Institute of Scientific Research Bengaluru, India 562 164
	}%
	\date{\today}
	
	\begin{abstract}
Non-P-divisibility is the strongest divisibility-based notion of quantum non-Markovianity. The generalized trace distance (GTD) based criterion is known to be an optimal witness of non-P-divisibility of dynamical maps, in the sense that a given map is non-P-divisible if and only if there exists a pair of states that demonstrates increased distinguishability in the GTD sense. This observation forms the basis for associating an information backflow with this type of non-Markovianity. Here we argue that in contrast to the map-level witnessing of non-Markovianity via divisibility, the association of information flow with divisibility must be applicable to individual states or state pairs (in the trace-distance context). In the context of qubit dynamics, we show that this association is generally neither tight nor faithful. We demonstrate this by means of counter-examples: (a) a pair of states whose distinguishability manifestly increases, but the GTD criterion fails to indicate this. (b) manifestly indistinguishable states that are indicated to be GTD distinguishable. In other words, we point out a subtle distinction between indicating state-specific behavior in terms of information backflow or distinguishability and map-level witnessing of non-Markovianity based on the generalized trace distance (GTD). Furthermore, we demonstrate that for qubit unital dynamics, the GTD-based measure provides no advantage over the standard trace distance measure in witnessing non-Markovianity. We determine the class of qubit non-unital channels where the standard trace distance measure is insufficient and the generalized measure is necessary.  
	\end{abstract}
	
	\maketitle
	
	\section{Introduction}
The isolation of a quantum system from its environment is an ideal condition, rendering open systems a significant object of study \cite{BreuerBook}. Such investigations are involved in achieving quantum control, focusing on the analysis of complete positive channels (CP) \cite{CHOI1975285, PhysRev.121.920}. CP maps distinguish themselves from positive (P) maps \cite{Choi_1972} by remaining unaffected by including ancillary reference systems that do not participate during evolution. Any uncorrelated system and the environment with a combined Hamiltonian evolution guarantees the CP evolution. Although certain quantum-correlated \cite{PhysRevLett.102.100402, doi:10.1142/S1230161223500117} states may induce CP dynamics, such occurrences are not universal across all unitary dynamics. The correlations generally lead to various dynamical maps, not just CP maps \cite{ PhysRevA.70.052110, PhysRevLett.73.1060, PhysRevA.64.062106, Colla_2022, dijkstra2012non}.

From start to finish, the channel goes through different temporal dynamics depending on various possible environments and their interaction with the system. This time evolution determines different processes based on these interactions. Should the environment be sufficiently large or the interaction remain sufficiently weak, it is reasonable to assume that the environment does not vary significantly alongside the system and that the correlations are not strong enough to be considered significant. Under such circumstances, the time evolution is laid out by the Gorini-Kossakowski-Sudarshan-Lindblad (GKSL) \cite{10.1063/1.522979, Lindblad1976} time-local equation, which follows the semi-group property characteristic of Markovian evolution. As quantum technology advances, allowing for experimentally greater control of the system's dynamics, the above assumptions can be relaxed \cite{white2020demonstration, bernardes2015experimental, PhysRevA.109.042419, liu2013photonic}. In that situation, semi-group dynamics will be replaced by general divisible dynamics, provided the dynamics is invertible. 

The divisibility of a dynamics is defined by:
\begin{equation}
\label{Eq:divisible}
\mathcal{E}_{(t,0)} = \Lambda_{(t,\tau)}\circ\mathcal{E}_{(\tau,0)}
\end{equation}
where, $\Lambda$ is the intermediate map taking the evolution from time $\tau$ to $t$.

Semi-group dynamics is a special case of divisible dynamics where $\Lambda_{(t,\tau)} = \mathcal{E}_{t-\tau}$. In general, even if the dynamics $\mathcal{E}_{(t,0)}$ from the initial time to any final time $t$ is CP, the intermediate map may or may not be CP, depending on the system-environment interactions. Correspondingly, the divisible dynamics is categorized into CP-divisible and non-CP-divisible. Non-CP-divisible dynamics can be P-divisible and non-P-divisible, depending on whether the (non-CP) intermediate map is P or non-P. Non-CP-divisible dynamics is considered non-Markovian based on divisibility \cite{PhysRevLett.105.050403}, while non-P-divisible dynamics is considered a stronger manifestation of non-Markovianity.

Breuer-Laine-Piilo \cite{PhysRevA.81.062115, PhysRevLett.103.210401} introduced an information-theoretic measure called BLP to understand the memory effect \cite{RevModPhys.88.021002, liu2013photonic}. It is based on the distinguishability of two quantum states, as indicated by the trace distance measure. The CP and P processes weaken distinguishability, suggesting a loss of information to the environment. Since BLP does not respond to the translation part of the dynamics \cite{liu2013nonunital}, a generalized trace distance measure (GBLP) was introduced \cite{PhysRevA.83.052128}. Here, the increment of the generalized trace distance for finite time is identified with non-Markovianity. In an attempt to unify the non-Markovianity based on trace distance and divisibility, a theorem in Ref. \cite{PhysRevLett.121.080407} shows that P-divisibility is connected to a trace distance condition, specifically the GBLP condition \cite{PhysRevA.83.052128, PhysRevA.92.042108} rather than the BLP condition, i.e., an invertible dynamical map is P-divisible if and only if a monotonic non-increase of GTD distinguishability is observed for any pair of states of the system in question. The GBLP condition thus serves as an optimal witness of the non-P-divisibility of a map. In related work, Ref. \cite{bylicka2017constructive} shows that for an invertible dynamical map, CP-divisibility is equivalent to a monotonic non-increase of distinguishability for two equiprobable states of the system together with an ancilla, giving an explicit method to construct a pair of such states in the case of a non-CP-divisible map.

These considerations suggest that the GBLP criterion should be the right indicator of the information flow \cite{PhysRevLett.121.080407, PhysRevA.83.052128, PhysRevA.92.042108}. It is worth stressing that the optimality of the GBLP condition as a witness of non-P-divisibility applies to the map as a whole rather than a given pair of states. Consider the non-P-divisible qubit dephasing map given by $\rho \xrightarrow{} \mathcal{E}_t(\rho) = \frac{1+\cos^2(\omega t)}{2}\rho + \frac{1-\cos^2(\omega t)}{2}Z\rho Z$. The pair of states $\ket{0}, \ket{1}$ remains manifestly invariant and would be unsuitable to witness the non-P-divisibility of the map. On the other hand, any pair of distinct states situated at the same polar angle will be swept periodically to an identical point on the $z$-axis before moving apart. Thus, the existence of pair(s) of states whose distinguishability doesn't evolve doesn't undermine the non-P-divisibility status of a map.

Even so, we expect that the concept of distinguishability or information flow must be applicable even at the microscopic level, i.e., at the level of pairs of individual evolving states. Any pair of states whose distinguishability manifestly evolves must be witnessed as such by the GTD criterion and vice versa. In this context, a tight witness is one that doesn't admit false negatives, i.e., all positive cases of outflow or inflow are identified as such and not flagged as no-flow. On the other hand, a faithful witness doesn't admit false positives (only positive cases are indicated as such). We may accept a non-tight witness, but an unfaithful witness would merit caution.

This raises the question of whether the GTD criterion can serve as a faithful witness to distinguishability or information flow when considering individual pairs of states. Here, we will find that this is not the case: it is non-tight in that there can be a pair of states with evolving distinguishability, but the GTD criterion does not indicate this. More worryingly, it is not faithful in that a pair of states can be manifestly undistinguishable but shown to be GTD distinguishable. In other words, we will find that the GTD criterion, based on which GBLP is neither a tight nor even a faithful witness of information backflow. 

The paper is written in the following way: In \ref{Sec: Measure information-flow}, the affine transformation of a qubit is defined, and the BLP and GBLP measures are introduced to characterise non-Markovianity. An example is shown for which BLP fails, but GBLP works. In Sec. \ref{Sec: GBLP}, the general class of dynamics where GBLP is necessary is determined. In Sec. \ref{Sec: information-flow not GBLP}, it is shown that GBLP is not the right indicator, in general, to capture the information flow. Finally, we conclude with discussions and conclusions in Sec. \ref{sec:disc}.

\section{\label{Sec: Measure information-flow}Non-Markovian measures for information backflow}

An affine transformation can represent the dynamics of a qubit interacting with an arbitrary quantum environment. Let us write the state in Pauli basis, then $\rho = \dfrac{1}{2}(\mathbb I + \Vec{r} \cdot \Vec{\sigma})$ with $\lvert\vec{r}\rvert\leq1$, here $\sigma's$ are Pauli matrices and $\Vec{r}$ is a Bloch vector representing the state in the Bloch sphere. The Bloch vector transformation is given by 
\begin{equation}
\label{eq:affine}
 \Vec{r}' = T \Vec{r} + \Vec{c}
 \end{equation}
The transformation has two components: (a) a $3\times 3$ $T$ matrix representing the rotation, contraction and expansion, and (b) a $3\times 1$ $\Vec{c}$ vector representing the translation. If $\Vec{c}=\Vec{0}$, then the dynamics is called unital dynamics, which leaves the completely mixed state, the centre of the Bloch sphere, invariant. The non-zero translation vectors represent non-unital dynamics. 

 \subsection{BLP measure}
Breuer-Laine-Piilo
\cite{PhysRevA.81.062115, PhysRevLett.103.210401} defined a measure that connects non-Markovianity with the information back-flow. This measure is important because it highlights the memory dependence on dynamics, which was not possible through divisibility-based non-Markovianity. It is not necessary to know the dynamical map to experimentally witness the non-Markovian behaviour of the dynamics. For defining the measure, they considered the trace distance defined as 
\begin{equation}
\label{BLP: def}
D = \dfrac{1}{2}\lVert \rho_1 - \rho_2 \rVert_1
\end{equation}
here $\lVert A \rVert_1 = \mathrm{Tr} \lvert A \rvert$, where modulus is defined as $\lvert A \rvert = \sqrt{A^\dagger A}$. This distance was shown to decide the distinguishability of two states prepared with equal probability:
\begin{equation*}
    P_{\rm dis} = \dfrac{1+D}{2}
\end{equation*}
here, $P_{\rm dis}$ is the probability to distinguish two states.
Unitary dynamics keeps this distance unchanged for all initial pairs; hence, distinguishability is unchanged under unitary dynamics. This is consistent with the fact that the system does not interact with any other quantum system under unitary dynamics, so there is no exchange of information. In the event of interactions based on distinguishability, there are two categories of dynamics: (a) Markovian dynamics: the distance does not increase for any initial pair for any interval of time. In this case, there is never an increase in distinguishability. (b) Non-Markovian dynamics: the distance increases at least for a few initial pairs for some time interval. In this case, the distinguishability may increase for an interval of time. 
 
CP-divisible dynamics is CP for all intermediate initial to intermediate final times; this will keep contracting for all intermediate times since CP maps contract. P-divisible dynamics is CP from the initial to the final time, but for some intermediate times, it can be just P, not CP; it will also continue to contract because P-maps also contract. Because of this, CP-divisible and P-divisible dynamics are considered Markovian according to this measure \cite{PhysRevLett.121.080407}.

\subsection{Generalized measure\label{GM}}
As we can see from Eq.~(\ref{BLP: def}), this distance is unaffected by the translation vector $\vec{c}(t)$. So, the above measure will not capture any information flow associated with that. 
 
Thus, the BLP definition does not address the translation aspect of the dynamics. If two maps have the same $T$ part with different $\vec{c}$, then BLP can not differentiate them. To overcome this, the GBLP measure, discussed below, was introduced.
The Helstorm Matrix 
\begin{equation}
\Delta_p(\rho_1,\rho_2) = p\rho_1 - q\rho_2
\end{equation}
with $p+q=1$ was considered for the distance to overcome this issue, as it is sensitive to translations \cite{PhysRevA.83.052128}. The generalized distance is defined as
\begin{equation}
    D = \lVert \Delta_p(\rho_1,\rho_2) \rVert_1,
\end{equation}
which is also a distance measure between two states, but prepared with biased probabilities. This is also shown by connecting the distinguishability of two states with the generalized distance \cite{PhysRevA.83.052128, PhysRevA.92.042108}. Unlike the BLP measure, P-divisibility is known to be connected with this generalized measure with the if and only if condition. 

As a simple example where BLP fails but GBLP works, consider the non-P-divisible generalized amplitude damping noise given by the following affine map $\mathcal{E}_t(\vec{r}) = T(t)\vec{r} + \vec{c}(t)$, where
\begin{align}\label{Eq: GAD}
    T(t) &= \begin{pmatrix}
   \sqrt{\eta(t)}      & 0 & 0 \\
     0    & \sqrt{\eta(t)}      & 0 \\
     0 & 0 & \eta(t) 
    \end{pmatrix}
    \nonumber\\
    \vec{c}(t) &= \begin{pmatrix}
       0 \\ 
       0 \\
         (2s(t)-1)(1-\eta(t))
    \end{pmatrix}
\end{align}
where, $\eta(t) = e^{-\gamma t}$ for $\gamma > 0$ and $s(t)=\cos^2{ft}$ for all $f$. Consider the initial states $\vec{r}_1 \equiv (x_1,y_1,z_1)^{\dagger}$ and $\vec{r}_2 \equiv (x_2,y_2,z_2)^{\dagger}$. We have the trace distance between their evolved versions to be
\begin{align}
    \frac{1}{2}\textrm{Tr}[\mathcal{E}_t(\rho_1)-\mathcal{E}_t(\rho_2)] &= \frac{1}{2}|T(t)\vec{r}_1-T(t)\vec{r}_2|
    \nonumber\\
    &=\frac{\sqrt{e^{-\gamma t}(\delta^2 x + \delta^2 y) + e^{-2\gamma t}\delta^2 z}}{2},
\end{align}
where $\delta x\equiv (x_2 - x_1)$, $\delta y\equiv (y_2 - y_1)$, and $\delta z\equiv (z_2 - z_1)$. The above equation implies a falling trace distance for all initial pairs, $\gamma$ and $f$, hence BLP Markovianity. With GBLP, the distance
\begin{align}
    \textrm{Tr}[p\mathcal{E}_t(\rho_1)-q\mathcal{E}_t(\rho_2)] &\neq |pT(t)\vec{r}_1-qT(t)\vec{r}_2 + (p-q)\vec{c}(t)|
\end{align} 
As a simple illustration, for $\vec{r}_1 \equiv (1,0,0)^{\dagger}$, $\vec{r}_2 \equiv (0,1,0)^{\dagger}$, $p=0.25$, $f=4$, and $\gamma=0.1$, the required generalized trace distance is plotted in Fig~(\ref{fig:example}). The distance is oscillating with time. 
\begin{figure}[ht]
    \centering
    \includegraphics[width=8.6cm, height=5.5cm]{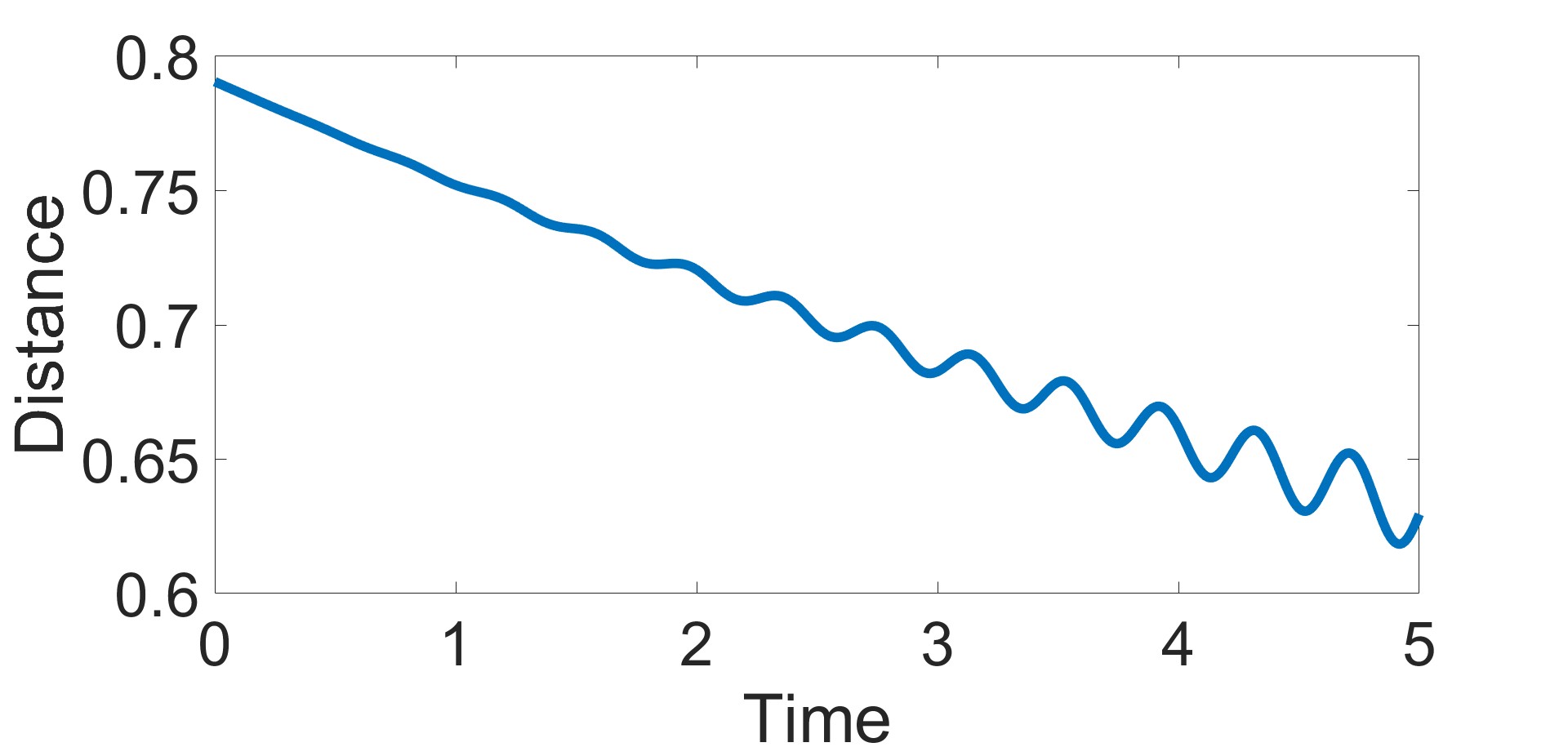}
    \caption{generalized distance between states $\dfrac{1}{2}(\mathbb I + \sigma_z)$ and $\dfrac{1}{2}(\mathbb I + \sigma_x)$ for $p=0.25$ is plotted as a function of time for generalized amplitude damping noise (See Eq.~\ref{Eq: GAD}). Noise parameters $\gamma$ and $f$ are 0.1 and 4, respectively. Distance and time are in arbitrary units.}
    \label{fig:example}
\end{figure}
Thus, according to the theorem \cite{PhysRevLett.121.080407}, which states that non-P-visible dynamics will necessarily have non-decreasing generalized trace distance for some initial pair for a certain interval of time, this channel is non-P-divisible and hence non-CP-divisible. As such, it illustrates the usefulness of the GBLP measure where BLP fails. 

\section{\label{Sec: GBLP}Unitality, divisibility and distance}
The above represents one example where the two measures differ; however, we aim to identify the general classes in which they differ. We are checking the distance evolution for both BLP and GBLP measures for CP, P and non-P maps. Once that is understood, the connection with the whole dynamics is straightforward because, for divisible dynamics, the whole dynamics is nothing but the composition of the intermediate maps. The GBLP measure is calculated for two initial states as $\rho_{i} \equiv (x_{i}, y_{i}, z_{i})$, where $i=1,2$ and probability $p$. The eigenvalues of $\Delta^{\dagger}\Delta$ are 
\begin{equation*}
\dfrac{1}{4}\bigg(\sqrt{(p x_1 - q x_2)^2 + (p y_1 - q y_2)^2 + (p z_1 - q z_2)^2} \pm \lvert p-q \rvert\bigg)^2
\end{equation*}
To calculate the distance, we add the square root of the eigenvalues, as the trace does not depend on the basis. The generalized distance is
\begin{equation*}
    D = \dfrac{1}{2}\lvert \lvert \Vec{w} \rvert + \lvert p-q \rvert\rvert + \lvert \vert \Vec{w} \rvert - \lvert p-q \rvert\rvert
\end{equation*}
Here, $\Vec{w} = p\Vec{r_1}-q\Vec{r_2}$. After expanding the above equation,
\begin{equation}   
D_t = 
\begin{array}{ll}
  \lvert \Vec{w}(t) \rvert, & \lvert \Vec{w}(t) \rvert \geq \lvert p-q \rvert \\
  \lvert p-q \rvert, &  \lvert \Vec{w}(t) \rvert < \lvert p-q \rvert \\
\end{array} 
\label{eq:Dt}
\end{equation}
If $p=q$, the GBLP distance reverts to the BLP distance.

\begin{theorem}
The GBLP and BLP conditions are equivalent for qubit unital dynamics.
\label{thm:BLP}
\end{theorem}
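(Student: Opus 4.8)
The plan is to reduce both measures, under the unitality assumption $\vec{c}=\vec{0}$, to a single geometric quantity and then compare the non-Markovianity conditions they induce. First I would track how the relevant vectors evolve. For BLP the distance depends only on $\vec{r}_1-\vec{r}_2$, whose image is $T(t)(\vec{r}_1-\vec{r}_2)$ irrespective of unitality, since the translation cancels in the difference, giving $D_{\rm BLP}(t)=\tfrac12|T(t)(\vec{r}_1-\vec{r}_2)|$. For GBLP the governing vector is $\vec{w}(t)=p\,\vec{r}_1(t)-q\,\vec{r}_2(t)$, and in general $\vec{w}(t)=T(t)\vec{w}(0)+(p-q)\vec{c}(t)$. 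The decisive observation is that unitality kills the translation term, so $\vec{w}(t)=T(t)\vec{w}(0)$ --- exactly the same $T(t)$-action as in BLP. Combined with Eq.~(\ref{eq:Dt}) this yields $D_{\rm GBLP}(t)=\max\{|T(t)\vec{w}(0)|,\,|p-q|\}$.

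Next I would establish the two implications. The direction ``BLP $\Rightarrow$ GBLP'' is immediate: setting $p=q=\tfrac12$ makes $|p-q|=0$ and $\vec{w}(0)=\tfrac12(\vec{r}_1-\vec{r}_2)$, so $D_{\rm GBLP}$ collapses onto $D_{\rm BLP}$ and any witnessed BLP increase is a GBLP increase. For the converse I would exploit homogeneity: since $|T(t)\vec{u}|=|\vec{u}|\,|T(t)\hat{u}|$ with $|\vec{u}|$ constant in time, whether $|T(t)\vec{u}|$ grows depends only on the direction $\hat{u}$. A GBLP increase can occur only in the branch $|T(t)\vec{w}(0)|\ge|p-q|$ --- the other branch is the constant floor $|p-q|$ --- so it forces $|T(t)\vec{w}(0)|$ to increase along $\hat{w}=\vec{w}(0)/|\vec{w}(0)|$. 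Choosing the pure states $\vec{r}_1=\hat{w}$, $\vec{r}_2=-\hat{w}$ gives the valid Bloch-vector difference $\vec{r}_1-\vec{r}_2=2\hat{w}$ in the same direction, whence $D_{\rm BLP}(t)=|T(t)\hat{w}|$ increases at the same instant. Thus GBLP non-Markovianity implies BLP non-Markovianity, and the two conditions coincide.

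The step I expect to be the main obstacle is the careful treatment of the piecewise structure in Eq.~(\ref{eq:Dt}): I must argue that the constant floor $|p-q|$ can never manufacture a distance increase that BLP would miss, so that all increases live in the $|T(t)\vec{w}(0)|$ branch, and that the admissible directions $\hat{w}$ produced by $\vec{w}(0)=p\vec{r}_1-q\vec{r}_2$ exhaust $S^2$ (e.g. $\vec{r}_1=\hat{u}$, $\vec{r}_2=-\hat{u}$ gives $\vec{w}(0)=\hat{u}$), matching the directions accessible to BLP. Everything else hinges on the single structural fact that unitality removes the $(p-q)\vec{c}(t)$ term; I would flag explicitly that for non-unital dynamics this surviving term is precisely what lets GBLP detect backflow invisible to BLP, thereby motivating the analysis of the non-unital classes in Sec.~\ref{Sec: GBLP}.
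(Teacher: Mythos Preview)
Your proposal is correct and follows essentially the same route as the paper: unitality reduces $D_{\rm GBLP}$ to a function of $T(t)\vec{w}(0)$ alone, the BLP$\Rightarrow$GBLP direction goes via $p=q=\tfrac12$, and the converse constructs a BLP witness pair from the GBLP vector $\vec{w}(0)$. The only cosmetic difference is that the paper takes the BLP pair to be $(p\vec{r}_1,\,q\vec{r}_2)$ directly (valid Bloch vectors since $p,q\le1$), whereas you normalize to $\hat{w}$ and use antipodal pure states; your explicit treatment of the piecewise floor in Eq.~(\ref{eq:Dt}) is in fact more careful than the paper's.
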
 
\begin{proof}
The translation vector vanishes for unital dynamics, and the dynamics is characterised by rotation, compression or expansion. Eq. (\ref{eq:Dt}) reduces to:
\begin{equation}
D_t = 
\begin{array}{ll}
  \lvert T(t)(p\Vec{r}_{1} - q\Vec{r}_{2}) \rvert, & \lvert \Vec{w}(t) \rvert \geq \lvert p-q \rvert \\
  \lvert p-q \rvert, &  \lvert \Vec{w}(t) \rvert < \lvert p-q \rvert \\
\end{array} 
\label{eq:BPLD}
\end{equation}  

We note the following:
\begin{enumerate}
\item If the channel is GBLP NM, then $D_t(p\vec{r}_1, q\vec{r}_2)>D_{t-\delta}(p\vec{r}_1, q\vec{r}_2)$ for some $p$, $\Vec{r}_1$, $\Vec{r}_2$, $\delta$ and $t$. Since $p, q\leq1$, $p\Vec{r}_1$ and $q\Vec{r}_2$ will also be valid vectors belonging to the Bloch sphere. Hence same distance can be written in terms of $\vec{r'}_1$ and $\vec{r'}_2$ where $\Vec{r'}_1=p\Vec{r}_1$ and $\Vec{r'}_2=q\Vec{r}_2$. The condition $D_t(p\vec{r}_1, q\vec{r}_2)>D_{t-\delta}(p\vec{r}_1, q\vec{r}_2)$ will become $D_t(\vec{r'}_1, \vec{r'}_2)>D_{t-\delta}(\vec{r'}_1, \vec{r'}_2)$ which is the condition for BLP-NM. 

\item If the channel is BLP NM, then $D_t(\vec{r}_1, \vec{r}_2)>D_{t-\delta}(\vec{r}_1, \vec{r}_2)$ for some $\Vec{r}_1$, $\Vec{r}_2$, $\delta$ and $t$. This condition gives one pair that belongs to GBLP with $p=\dfrac{1}{2}=q$. That makes GBLP also NM.
\end{enumerate}
\end{proof}

As a corollary of Theorem \ref{thm:BLP}, there is no unital non-P-divisible BLP Markovian dynamics. We now consider the question of what are the channels for which the BLP and GBLP criteria are nonequivalent.
 
\begin{theorem}\label{theorem2}
In the case of qubit dynamics, BLP and GBLP differ only for non-unital non-P-divisible BLP dynamics. 
\end{theorem}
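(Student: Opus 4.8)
The plan is to turn the statement into a short logical deduction resting on three ingredients: Theorem~\ref{thm:BLP}, its corollary, and the established equivalence between the GBLP condition and P-divisibility \cite{PhysRevLett.121.080407}. First I would make precise what it means for the two criteria to ``differ''. The BLP measure is nothing but the GBLP measure at $p=q=\tfrac12$: there $\vec w=\tfrac12(\vec r_1-\vec r_2)$, the bias $\lvert p-q\rvert$ vanishes, and Eq.~(\ref{eq:Dt}) collapses to the ordinary trace distance. Hence any growth of the BLP distance is literally a growth of the GBLP distance for that same pair at $p=\tfrac12$, so BLP non-Markovianity implies GBLP non-Markovianity. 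The two measures can therefore disagree only in one direction, namely GBLP reporting non-Markovianity while BLP reports Markovianity; characterising the disagreement set reduces to characterising exactly these channels.

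Next I would invoke the cited equivalence \cite{PhysRevLett.121.080407}: a dynamics is P-divisible if and only if the generalized trace distance is monotonically non-increasing for every pair and every choice of $p$. Contrapositively, GBLP detects non-Markovianity precisely when the dynamics is non-P-divisible. Feeding this into the reduction of the previous paragraph, every channel on which BLP and GBLP differ is simultaneously non-P-divisible and BLP-Markovian. This fixes two of the three descriptors appearing in the statement.

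To obtain the remaining descriptor I would appeal to Theorem~\ref{thm:BLP} and its corollary. Theorem~\ref{thm:BLP} says the BLP and GBLP conditions coincide for unital dynamics, so a unital channel cannot belong to the disagreement set; equivalently, the corollary already asserts that no unital, non-P-divisible, BLP-Markovian dynamics exists. Combining this with the previous step, any channel that separates the two measures must be non-unital as well as non-P-divisible and BLP-Markovian, which is exactly the claim. To confirm that this class is not vacuous I would point to the generalized amplitude damping map of Eq.~(\ref{Eq: GAD}), which was already shown to be BLP-Markovian yet non-P-divisible, hence a genuine witness living in the stated class.

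I expect the only delicate point to be the faithfulness of the first reduction, caused by the piecewise form of Eq.~(\ref{eq:Dt}): whenever $\lvert\vec w\rvert<\lvert p-q\rvert$ the generalized distance is clamped to the constant $p-q$ and cannot increase, so non-Markovianity can only be registered in the unclamped branch $\lvert\vec w\rvert\ge\lvert p-q\rvert$. I would therefore need to check carefully that the $p=q=\tfrac12$ slice always sits in the unclamped branch, where the clamp threshold is zero, so that BLP growth is inherited by GBLP without loss, while arguing that for biased $p$ the distance can grow in regimes inaccessible to the symmetric slice. This asymmetry between the clamped and unclamped branches is what makes the inclusion strict and, together with the non-unital translation vector $\vec c$ that the bias term $(p-q)\vec c$ couples to, is precisely what allows the non-unital examples to separate the two measures.
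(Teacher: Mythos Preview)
Your proposal is correct and follows essentially the same route as the paper: use the equivalence of GBLP with P-divisibility \cite{PhysRevLett.121.080407} to pin down the non-P-divisible descriptor, use Theorem~\ref{thm:BLP} (and its corollary) to pin down the non-unital descriptor, and observe that BLP non-Markovianity already forces GBLP non-Markovianity so that disagreement can only go one way. Your write-up is in fact somewhat more explicit than the paper's---you spell out the one-sided inclusion BLP\,$\Rightarrow$\,GBLP via the $p=q=\tfrac12$ specialisation, and your final paragraph on the clamped/unclamped branches of Eq.~(\ref{eq:Dt}) is a useful sanity check that the paper omits---but the logical skeleton is the same.
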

\begin{proof}
    Our earlier example demonstrated that the BLP and GBLP conditions are not equivalent. From Theorem \ref{thm:BLP}, it follows that, nevertheless, they are equivalent for unital noise. Thus, their non-equivalence can hold only for non-unital noise. Here, by Ref.~\cite{PhysRevLett.121.080407}, we know that P-divisibility is equivalent to the GBLP Markovianity. Thus, the non-equivalence of the BLP and GBLP conditions can arise only from non-unital non-P-divisible noise. If this noise is BLP non-Markovian, then the two criteria will agree.
\end{proof}

\section{\label{Sec: information-flow not GBLP}Information flow and distinguishability}

The idea of information flow was introduced as an interpretation of distinguishability in the BLP definition. If the distinguishability is invariant between all pairs, then this indicates no information exchange, suggesting that the dynamics is unitary. On the other hand, if the distance monotonically decreases, the distinguishability decreases, suggesting that the information should be transferred away from the system.

This intuition was justified by defining the internal information ($I_{int}$) between system states using BLP and GBLP distances, which can be accessed through the measurement of the system alone \cite{RevModPhys.88.021002}. The external information ($I_{ext}$), which cannot be accessed alone from the system, lies in correlation and/or within environment states and is defined by subtracting from total information ($I_{\rm total}$), which is from the combined state of the system-environment. So 
\begin{equation}
\label{Eq: InformationFlow}
I_{\rm int}^t(\rho_S^1,\rho_S^2)+I_{\rm ext}^t=I_{\rm total}^t(\rho^1,\rho^2),
\end{equation}
where $I_{\rm ext}=I_{\rm total}-I_{\rm int}$ and $\rho_S$, $\rho$ are system and combined system-environment state respectively. Since total information is constant owing to the unitarity of the dynamics, $I_{\rm total}^t=I_{\rm total}^0$. All information forward-flow or back-flow arguments are justified by Eq.~(\ref{Eq: InformationFlow}). This suggests that information flow should be associated with individual pairs of states, unlike properties such as P-divisibility or CP-divisibility, which characterize a map as a whole. 

While GTD is thus known as a valid witness of non-Markovianity at the level of maps, here we wish to scrutinize the tightness and faithfulness of the association of distinguishability or information flow with GTD at the level of states or pairs of states.
%We now consider how tight or faithful a witness the GTD criterion is of information flow in this sense. 
Below, we will provide specific instances to argue that it is neither a tight nor a faithful indicator.

\subsection{False negative test of distinguishability}
We now demonstrate that the GTD criterion is not a tight indicator of distinguishability when applied to specific pairs of states.
The GBLP definition does not follow distinguishability and information flow for all pairs \textit{at all times}. There is a possibility that states can evolve and come closer or go far, but the information flow is indicated to be zero.

$\textbf{Examples}$: 
Let us consider the unital dynamics with 
\begin{equation}\label{Eq: isometric}
    T(t) = \left(\begin{array}{ccc}
   \exp^{-\gamma t} & 0 & 0  \\
   0 & \exp^{-\gamma t} & 0  \\
   0 & 0 & \exp^{-\gamma t}
\end{array}\right)
\end{equation}
According to Eq.~(\ref{Eq: InformationFlow}), the BLP definition suggests the information outflow for all pairs for all time. In light of Eq.~(\ref{eq:BPLD}), the generalized distance is constant for Bloch radius $r\leq0.5$ in the considered example for $\gamma=0.1$ and $p=0.25$. Thus, GBLP will show no outflow or distinguishability change for certain pairs for corresponding probabilities, as shown in Fig~(\ref{fig:isometric_plot}). This contradicts our expectation that the absence of information flow should only be true for unitary dynamics or pairs where their subspaces do not evolve. This suggests that the generalized definition of distance does not capture the information flow for all pairs as required by Eq.~\ref{Eq: InformationFlow}. 
\begin{figure}[ht]
    \centering
    \includegraphics[width=8.6cm, height=5.5cm]{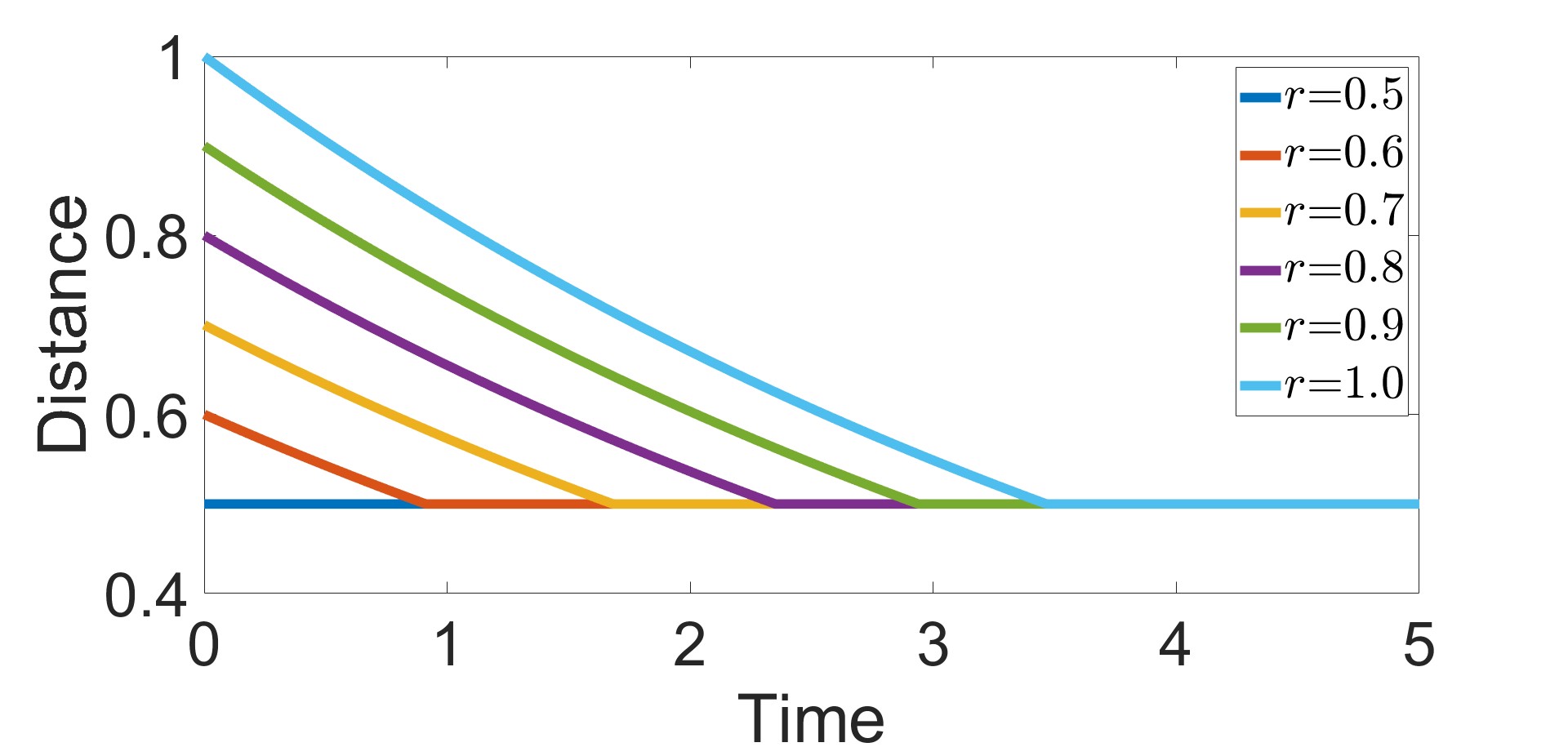}
    \caption{Generalized distance is plotted as a function of time for different Bloch radii $r$ with two diametrically opposite points in the Bloch sphere ($\theta=\pi/7$, $\phi=\pi/5$, $r$ is varied) for an isometrically decaying dynamical map (Eq.~\ref{Eq: isometric}). The decay parameter $\gamma$ and the probability of preparing states $p$ are 0.1 and 0.25, respectively. Distance and time are in arbitrary units.}
    \label{fig:isometric_plot}
\end{figure}

\begin{figure}[ht]
    \centering
    \includegraphics[width=8.6cm, height=5.5cm]{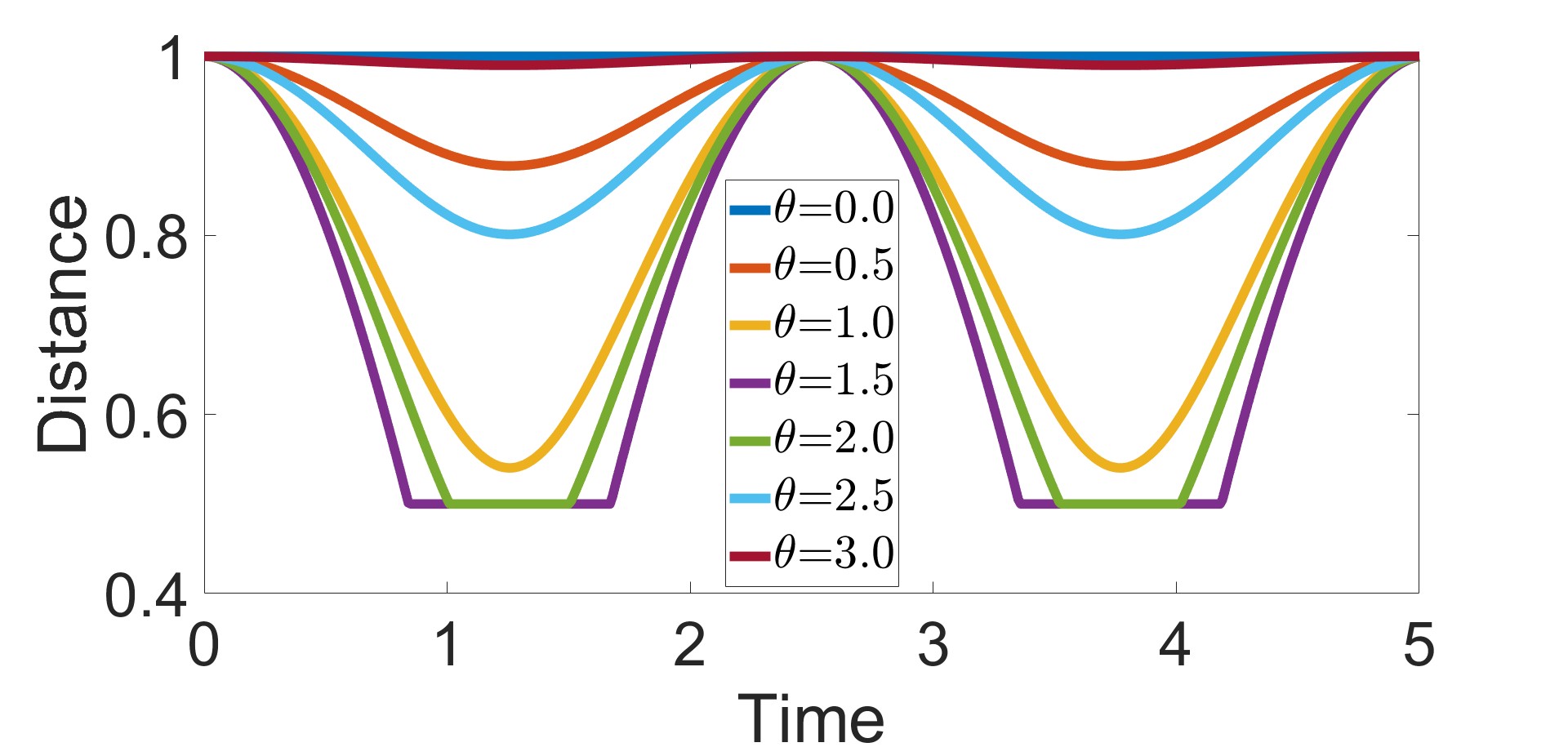}
    \caption{Generalized distance is plotted as a function of time for different pairs of diametrically opposite points on the surface of the Bloch sphere ($r=1$, $\phi=0.5$, $\theta$ is varied) for a spin interacting with another spin (Eq.~\ref{Eq: Spin}). The dynamical parameter $\omega$ and the probability of preparing states $p$ are 1.25 and 0.25, respectively.  Distance and time are in arbitrary units.}
    \label{fig:spinchain_plot}
\end{figure}
This issue is not restricted to BLP Markovian dynamics. For BLP non-Markovian dynamics, the domain exists, showing no outflow or inflow, even though states manifestly evolved during that period. To illustrate this, two interacting spins are considered, with one spin serving as the system and the other as the environment. The dynamical map is given by:
\begin{equation}\label{Eq: Spin}
T(t) = \left(\begin{array}{ccc}
   \cos{\omega t} & 0 & 0  \\
   0 & \cos{\omega t} & 0  \\
   0 & 0 & 1  \\
\end{array}\right) 
\end{equation}
This non-invertible process is trivially non-Markovian and exhibits evident information backflow. Yet, for certain pairs of states, it can evolve their distance from a regime of $\lvert \Vec{w}(t) \rvert \geq \lvert p-q \rvert$ to one of $\lvert \Vec{w}(t) \rvert < \lvert p-q \rvert$, which, in light of Eq. (\ref{eq:Dt}), can cause the GBLP measure to flatten out. The non-changing phase is shown in Fig~(\ref{fig:spinchain_plot}) for $\omega=1.25$ and $p=0.25$, which shows no information flow from or towards the system. We can observe in this example that pairs along the z-axis do not evolve, which is consistent for both BLP and GBLP measures.

These examples suggest that GBLP is a measure of contracting maps related to P-divisibility. This does not capture the idea of information flow for all cases at all times. Even in our examples, we see that there are other time intervals where GBLP faithfully captures information outflow or backflow (the non-flat regimes of in Figure \ref{fig:spinchain_plot}), and other states where GBLP faithfully captures information outflow or backflow at all times. We stress that our result doesn't undermine the optimality of GBLP as a witness of the non-P-divisibility of a map as a whole, but rather highlights that it isn't a faithful indicator of information backflow in the context of individual pairs of states.
%Thus, in particular, our result is not at variance with that of Ref. \cite{bylicka2017constructive}, who furnish a constructive method for indicating information backflow of bijective non-CP-divisible dynamics.

\subsection{False positive test of distinguishability}
More worryingly, we now demonstrate that the GTD criterion can be unfaithful in the sense we discussed when applied to specific pairs of states. To give an example of a case where the GBLP condition can be misleading, consider the composition of two channels as $\mathcal{E}_{t,0}=\mathcal{B}_{t,\tau}\circ\mathcal{A}_{\tau,0}$ where $\mathcal{A}$ acting till time $\tau$ starting from zero is a CPTP depolarising channel $\mathcal{A}(\rho) = (1-p)\rho + p\frac{\mathbb{I}}{2}$ corresponds to the unital channel with matrix form given by $\mathcal{A}(\tau) = \left(\begin{array}{cccc}
   1 & 0 & 0 & 0 \\
   0 & 1-p(\tau) & 0 & 0  \\
   0 & 0 & 1-p(\tau) & 0  \\
   0 & 0 & 0 & 1-p(\tau)  \\
\end{array}\right) $ and $\mathcal{B}$ is the shift channel given by $\mathcal{B}(t,\tau) = \left(\begin{array}{cccc}
   1 & 0 & 0 & 0 \\
   0 & 0 & 0 & 0  \\
   0 & 0 & 0 & 0  \\
   c & 0 & 0 & 0  \\
\end{array}\right) $. Here, the depolarising probability is $p(0)=0$ and $p(\tau)=1$. The shift is assumed to be sufficiently small not to leave the Bloch ball. More generally, suppose the shift dynamics is given by an increase in Bloch vector length given by $c(1-|\vec{r}|)$, which ensures that positivity is preserved. At time $\tau$, all initial states will go to a completely mixed state. The GBLP measure evaluates the distance to zero at time $\tau$, and after the shift channel, the distance will be $\lvert(p-q)\rvert$ at time $t$ irrespective of the value of $c$. Though the states are identical at time $\tau$, this non-zero distance after the shift gives the appearance of distinguishing them because it has a label of biased probabilities, which reflects the preparation information rather than information back-flow from the environment.

The divisible dynamics mentioned in Theorem \ref{theorem2}, where GBLP is necessary, will necessarily give false positives. This can be specifically seen in the example shown in \ref{GM}.

\section{Discussion and Conclusion \label{sec:disc}}
 
The generalized BLP measure was needed to characterise the non-Markovianity of maps because BLP did not capture aspects of the non-unital noise. Specifically, it was shown that GBLP has a direct connection with the P-divisibility of maps. We have demonstrated that GBLP and BLP are equivalent witnesses of non-Markovianity for unital dynamics. For non-unital cases where the Bloch sphere is contracting, BLP does not help decide the non-Markovianity of maps. In those cases, GBLP shows that distance can increase solely with time due to the translation term associated with information backflow. 

While GBLP is a valid witness of non-Markovianity of dynamical maps (via optimization), here we argue that the association of information flow or distinguishability with divisibility is neither tight nor faithful when applied to individual states or pairs of states (in the trace-distance context). We demonstrate this using specific counter-examples: (a) a pair of states whose distinguishability manifestly increases, but the GTD criterion fails to indicate this. (b) manifestly indistinguishable states that are indicated to be GTD distinguishable.

Here, the GBLP case for a qubit, as an example, is considered to highlight this difference; other specific definitions and higher dimensions can be explored in the future. What kind of information non-unital dynamics possess within their translation part should also be explored to understand the information aspect of non-Markovianity for any dynamics.

However, we can consider ``trivial'' extensions to higher dimensions to demonstrate the relevance of our results for these systems. An example here would be noise action on a qutrit (3-dimensional quantum system), of the form $\mathcal{E}_2 \oplus \mathbb{I}_1$, i.e., nontrivial dynamics in a qubit subspace of a qutrit and identity in the other 1-dimensional level. The partitioned qutrit density matrix can be conveniently represented as:
\begin{equation}
\rho = \begin{pmatrix}
\frac{1}{2}(w + r_z) & \frac{1}{2}(r_x - i r_y) & \rho_{02} \\
\frac{1}{2}(r_x + i r_y) & \frac{1}{2}(w - r_z) & \rho_{12} \\
\rho_{02}^* & \rho_{12}^* & 1-w
\end{pmatrix},
\end{equation}
where the coherences $\rho_{02}, \rho_{12}$ and population $\rho_{22}=1-w$ remain invariant. Within the qubit subspace, the rescaled subspace dynamics has the usual form
\begin{equation}
\vec{r'} = T\vec{r} + w\vec{c},
\end{equation}
where the displacement vector $\vec{c}$ is scaled by $w$ because the ``fixed point'' of the dynamics (like the ground state in amplitude damping) is scaled by how much of the system is actually ``living'' in that subspace. On the other hand, note that by definition, $r_i = \text{Tr}(\rho \sigma_i)$. If only a fraction $w$ of the population is in the qubit subspace, the maximum value any $r_i$ can take is $w$. Thus, $\vec{r}$ is naturally ``pre-scaled'' by the state's geometry. 

Therefore, our above analysis of qubit dynamics carries over to this specific case of qutrit dynamics, except that instead of a unit Bloch sphere ($|\vec{r}| \leq 1$), the dynamics occurs within a rescaled Bloch sphere of radius $w$. One can similarly show that qubit dynamics embedded in a composite system or an encoded system allows us to retain our above conclusions. More general noise in higher dimensions remains a topic for future study.

\section*{Acknowledgments}

VP acknowledges the financial assistance of the Anusandhan National Research Foundation (ANRF) through grant CRG/2022/008345.	

\bibliography{main.bib}
\end{document}